\begin{document}

\newcommand{\textblue}[1]{\textcolor{blue}{#1}}
\newcommand{\todo}[1]{\textblue{TODO: #1}}
\newtheorem{theorem}{Theorem}[section]
\newtheorem{corollary}{Corollary}[section]
\newcommand{\csp}[0]{CSP}


\newcommand{\fullpaperappendix}[0]{}

\pagenumbering{arabic}

\title{Exploring Privacy Preservation in Outsourced K-Nearest Neighbors with Multiple Data Owners}

\def\icsi{\raisebox{6pt}{\small $\ast$}}
\def\ucb{\raisebox{6pt}{\small $\dagger$}}

\newcommand{\authspace}{\hspace{5mm}}

\numberofauthors{1}

\author{
  \alignauthor
  Frank Li\ucb \authspace
  Richard Shin\ucb \authspace
  Vern Paxson\ucb\icsi \authspace\\
  \vspace{.2cm}
  \affaddr{
    \ucb University of California, Berkeley \authspace
    \icsi International Computer Science Institute
  }\\
  \vspace{.2cm}
  \affaddr{
    \{frankli, ricshin, vern\}@cs.berkeley.edu
  }
  \vspace{.3cm}
}

\maketitle
\begin{abstract}
The k-nearest neighbors (k-NN) algorithm is a popular and effective classification algorithm. Due to its large storage and computational requirements, it is suitable for cloud outsourcing. However, k-NN is often run on sensitive data such as medical records, user images, or personal information. It is important to protect the privacy of data in an outsourced k-NN system.

Prior works have all assumed the data owners (who submit data to the outsourced k-NN system) are a single trusted party. However, we observe that in many practical scenarios, there may be multiple mutually distrusting data owners. In this work, we present the first framing and exploration of privacy preservation in an outsourced k-NN system with multiple data owners. We consider the various threat models introduced by this modification. We discover that under a particularly practical threat model that covers numerous scenarios, there exists a set of adaptive attacks that breach the data privacy of any exact k-NN system. The vulnerability is a result of the mathematical properties of k-NN and its output. Thus, we propose a privacy-preserving alternative system supporting kernel density estimation using a Gaussian kernel, a classification algorithm from the same family as k-NN. In many applications, this similar algorithm serves as a good substitute for k-NN. We additionally investigate solutions for other threat models, often through extensions on prior single data owner systems.
\end{abstract}

\pagenumbering{arabic}

\section{Introduction}

The k-nearest neighbors (k-NN) classification algorithm has been widely and effectively used for machine learning applications. Wu \textit{et al.} categorized it as one of the top 10 most influential data mining algorithms \cite{Wu:2007:TAD:1327434.1327436}. K-NN identifies the $k$ points nearest to a query point in a given data set, and classifies the query based on the classifications of the neighboring points. The intuition is that nearby points are of similar classes. K-NN classification benefits from running over large data sets and its computation can be expensive. These characteristics make it suitable to outsource the classification computation to the cloud. However, k-NN data is often sensitive in nature. For example, k-NN can be applied to medical patient records, census information, and  facial images. 
It is critical to protect the privacy of such data when outsourcing computation.

The outsourced k-NN model focuses on providing classification, rather than training. Hence, it is assumed that the algorithm parameters have been adequately chosen through initial investigation. Prior work on preserving privacy in such a model uses computation over encrypted data \cite{wong, no-share-key, arxiv-knn, knn-revisted}. These works have all assumed a single trusted data owner who encrypts her data before sending it to a cloud party.
Then queriers can submit encrypted queries to the system for k-NN classification. In this setting, we would like to keep data private from the cloud party and queriers, and keep queries private from the cloud party and the data owner. In some existing works, the queriers share the secret (often symmetric) data encryption keys. In others, the querier interacts with the data owner to derive the encryption for a query without revealing it. 

From these existing models, we make a simple observation with non-trivial consequences: the data owner may not be completely trusted. In all previous models, trust in the data owner is natural since only the owner's data privacy is at risk. However, this assumption does not hold in some practical scenarios:
\begin{itemize}
\item Multiple mutually-distrusting parties wish to aggregate their data for k-NN outsourcing, without revealing data to each other. Since k-NN can perform significantly better on more data, all parties can benefit from improved accuracy through sharing. As an example, hospitals might contribute medical data for a k-NN disease classification study or a service available to doctors. However, they do not want to release their patient data in the clear to each other or a cloud service provider. In some cases, these data owners may act adversarially if that allows them to learn the other owners' data.

\item The k-NN outsourced system allows anyone to be a data owner and/or querier. This is a generalization of the previous scenario, and a privacy-preserving solution allows data owner individuals to contribute or participate in a system directly without trusting any other parties with plaintext data. Some potential applications include sensor information, personal images, and location data.
\end{itemize}

These scenarios involve data aggregated from multiple owners, hence we term this the \textit{multi-data owner outsourced} model. In this paper, we provide the first framing and exploration of privacy preservation under this practical model. Since these multi-data owner systems have not yet been used in practice (perhaps because no prior works address privacy preservation), we enumerate and investigate variants of a privacy threat model. However, we focus on one variant that we argue is particularly practical and covers a number of interesting cases. 
Under this model, we discover a set of adaptive privacy-breaching attacks based purely on the nature of how k-NN works, regardless of any system protocol and encryption designs.  

To counter these privacy attacks, we propose using kernel density estimation with a Gaussian kernel instead of k-NN. This is an alternative algorithm from the same class as k-NN (which we can consider as kernel density estimation with a uniform kernel). While this algorithm and k-NN are not equivalent, we demonstrate that in many applications the Gaussian kernel should provide similar accuracy.
We construct a privacy-preserving scheme to support such an algorithm using partially homomorphic encryption and garbled circuits, although at a computational and network cost linear to the size of the data set. Do note that k-NN itself is computationally linear. While this system may not yet be practical for large data sets, it is both proof of the existence of a theoretically secure alternative as well as an first step to guide future improvements.

In summary, our contributions are:
\begin{itemize}
\item We provide the first framework for the k-NN multi-data owner outsourced model.
\item We explore privacy preservation of k-NN under various threat models. However, we focus on one threat model we argue is both practical and covers several realistic scenarios. We discover a set of privacy-breaching attacks under such a model.
\item We propose using kernel density estimation with a Gaussian kernel in place of k-NN. We describe a privacy-preserving construction of such a system, using partially homomorphic encryption and garbled circuits.
\end{itemize}

\section{Background}

\subsection{Nearest Neighbors}
\label{sec:background-nn}

\sloppy K-nearest neighbors (k-NN) is a simple yet powerful non-parametric classification algorithm that operates on the intuition that neighboring data points are similar and likely share the same classification. It runs on a training set of data points with known classifications: $\mathcal{D} = \{(\mathbf{x}_1, y_1)$, $\cdots,$ $(\mathbf{x}_n, y_n)\}$, where $\mathbf{x}_i$ is the $i$-th data point and the label $y_i \in \{1, \cdots, C\}$ indicates which of the $C$ classes $\mathbf{x}_i$ belongs to. For a query point $\mathbf{x}$, k-NN determines its classification label $y$ as the most common label amongst the $k$ closest points to $\mathbf{x}$ in $\mathcal{D}$. Closeness is defined under some distance metric (such as Manhattan or Euclidean distance for real-valued vectors). Since all of k-NN's computation is at classification time, instead of training time, $k$-NN is an example of a lazy learning method.


In the case of 1-NN ($k=1$), as the number of training data points approaches infinity, the classification error of k-NN becomes bounded above by twice the Bayes error (which is the theoretical lower bound on classification error for any ideal classification algorithm). 
In general, k-NN benefits greatly from execution over large data sets. Given a single data owner will have a limited amount of data, aggregation of data from multiple parties is desirable for increased accuracy in k-NN applications. Also, since k-NN classification is computationally expensive, it is practical to outsource computation. These characteristics motivate our interest in multi-data owner outsourced k-NN classification.

\subsection{Kernel density estimation and regression}
\label{sec:background-kde}

K-NN is an example of an \emph{instance-based} learning method as it involves making a prediction directly from existing instances in the training data (as opposed to using a model of the training data, for example).
K-NN uses only the $k$ nearest instances to a query and allows each to equally influence the query classification. However, there are other possible rules to decide how much influence each instance should have on the final prediction, known as \emph{kernels}.\footnote{To disambiguate between the many distinct meanings of ``kernel'' in machine learning, statistics, and computer science, the kind of kernels in this paper are also called ``smoothing kernels''.}

Formally, a kernel $K(x)$ is a function which satisfies
\begin{align*}
  (1)\hspace{1mm} \int_{-\infty}^\infty K(x) dx &= 1 \hspace{7mm} (2)\hspace{1mm} K(x) = K(-x).
\end{align*}
In other words, it integrates to $1$ and is symmetric across $0$.

Given $n$ samples $\{\mathbf{x}_1, \cdots, \mathbf{x}_n\}$ of a random variable $\mathbf{x}$, one can estimate the probability density function $p(\mathbf{x})$ with a kernel $K$ in the following way, called kernel density estimation:
\begin{align*}
  p(\mathbf{x}) &= \frac{1}{n} \sum_{i=1}^n K(\|\mathbf{x} - \mathbf{x}_i\|)
\end{align*}
where $\|\cdot\|$ is a norm. Given this estimate, classification can be determined as the most probable class:
\begin{align*}
  \mathcal{D}_C &= \{x_i\,|\,(x_i, y_i) \in \mathcal{D}, y_i = C\} \\
  p(y = C | \mathbf{x}, \mathcal{D}) 
	&\propto \sum_{\mathbf{x}_i \in \mathcal{D}_C} K(\|\mathbf{x} - \mathbf{x}_i\|) \\
	\arg\max_C p(y = C | \mathbf{x}, \mathcal{D}) &= \arg\max_C \sum_{\mathbf{x}_i \in \mathcal{D}_C} K(\|\mathbf{x} - \mathbf{x}_i\|)
\end{align*}
See Appendix~\ref{sec:background-kde-appendix} \fullpaperappendix for a more detailed exposition.

Therefore, to classify a particular point $\mathbf{x}$, we can sum the kernel values of the points which belong to each class and determine which class has the largest sum.
The following uniform kernel equates this derivation with k-NN:
\begin{align*}
  d_{k, \mathbf{x}, \mathcal{D}} &:= \text{distance of $k$\textsuperscript{th} nearest point from $\mathbf{x}$ in $\mathcal{D}$} \\
  K(\|\mathbf{x} - \mathbf{x}_i\|; k, \mathcal{D}) &= \begin{cases}
    \frac{1}{2d_{k, \mathbf{x}, \mathcal{D}}} & \text{if $\|\mathbf{x} - \mathbf{x}_i\| \le d_{k, \mathbf{x}, \mathcal{D}}$} \\
    0 & \text{otherwise}
  \end{cases}
\end{align*}
$d_{k, \mathbf{x}, \mathcal{D}}$ is the \emph{width} of this kernel, as $K(t) > 0$ for $||t-\mathbf{x}|| \le d_{k, \mathbf{x}, \mathcal{D}}$ and $K(t) = 0$ for $||t-\mathbf{x}|| > d_{k, \mathbf{x}, \mathcal{D}}$. Thus, k-NN classification is kernel density estimation with a uniform finite-width kernel.

One can substitute in a different kernel to obtain a classifier which behaves similarly. One example is the Gaussian kernel:
\begin{align*}
  K(x) = \frac{1}{\sigma \sqrt{2 \pi}} e^{-\frac{x^2}{2 \sigma^2}}
\end{align*}
where $\sigma$ is the standard deviation parameter. Note that this kernel has infinite width, meaning all points have some influence on the classification.

\subsection{Paillier Cryptosystem}
\label{sec:background-paillier}

The Paillier cryptosystem \cite{paillier} is a public-key semantically secure cryptographic scheme with partially homomorphic properties. These homomorphic properties allow certain mathematical operations to be conducted over encrypted data.
Let the public key $pk$ be $(N, g)$, where $N$ is a product of two large primes and $g$ is a generator in $\mathbb{Z}_{N^2}^{*}$. The secret decryption key is $sk$. Also let $E_{pk}$ be the Paillier encryption function, and $E^{-1}_{sk}$ be the decryption function. Given $a,b \in \mathbb{Z}_N$, the Paillier cryptosystem has the following properties:
\begin{align*}
  E_{sk}^{-1}(E_{pk}(a) \cdot E_{pk}(b) \mod N^2) &= a+b \mod N \tag{Add} \\
  E_{sk}^{-1}(E_{pk}(a)^b \mod N^2) &= a \cdot b \mod N \tag{Mult}
\end{align*}
In other words, multiplying the ciphertext for two values results in the ciphertext for the sum of those values, and computing the $c$-th power of a value's ciphertext will result in the ciphertext for the value multiplied by $c$.

The Paillier ciphertext is twice the size of a plaintext; if $N$ is 1024 bits, the  ciphertext is 2048 bits. A micro-benchmark of Paillier in \cite{cryptdb} shows practical performance runtimes: encryption of a 32-bit integer takes 9.7 ms, decryption takes 0.7 ms, and the homomorphic addition operation takes 0.005 ms.

\subsection{Yao's Garbled Circuits}

Yao's garbled circuit protocol \cite{yao, yao_proof} allows a two-party evaluation of a function $f(i_1, i_2)$ run over inputs from both parties, without revealing the input values when assuming a semi-honest adversary model. If Alice and Bob have private input values $i_A$ and $i_B$, respectively, the protocol is run between them (the input owners) and outputs $f(i_A, i_B)$ without revealing the inputs to any party. 

In the protocol, the first party is called the generator and the second party is the evaluator. The generator takes a Boolean circuit for computing $f$, and generates a ``garbled" version $GF$ (intuitively, it is cryptographically obfuscating the circuit logic). Any input $i$ for function $f$ has a mapping to a garbled input for $GF$, which we will denote as $GI(i)$. The generator gives the garbled circuit $GF$ to the evaluator, as well as the generator's garbled inputs $GI(i_1)$. Since the generator created the garbled circuit, only the generator knows the valid garbled inputs. The evalutor then engages in a 1-out-of-2 oblivious transfer protocol \cite{oblivious_transfer, 1_2_ot} with the generator to obliviously obtain the garbled input values for her own private inputs $i_2$. The evaluator can now evaluate $GF(GI(i_1), GI(i_2))$ to obtain a garbled output, which maps back to the output of $f(i_1, i_2)$.

For a more concrete understanding of the garbled circuit itself, consider a single binary gate $g$ (e.g., an AND-gate) with inputs $i$ and $j$, and output $k$. For each input and the output, the generator generates two random cryptographic keys $K_x^0$ and $K_x^1$ that correspond to the bit value of $x$ as 0 and 1, respectively. The generator then computes the following four ciphertexts using a symmetric encryption algorithm $Enc$ (which must be IND-CPA, or indistinguishable under chosen-plaintext attacks):
\begin{align*} 
Enc_{(K_i^{b_i}, K_j^{b_j})} (K_k^{g(b_i,b_j)})\text{ for } b_i, b_j \in \{0,1\}
\end{align*} 
A random ordering of these four ciphertexts represents the garbled gate.  Knowing $(K_i^{b_i}, K_j^{b_j})$ allows the decryption of only one of these ciphertexts to yield the value of $K_k^{g(b_i,b_j)}$ and none of the other outputs. Similarly, valid outputs cannot be obtained without the keys associated with the gate inputs. A garbled circuit is the garbling of all the gates in the Boolean circuit for $f$, and can be evaluated gate-by-gate without leaking information about intermediate computations. 
There exists efficient implementations of garbled circuits \cite{yao_implementation, aby, efficient_garbling, tinygarble}, although naturally the garbled circuit's size and evaluation runtime increases with the complexity of the evaluated function's circuit.

\section{Related Work}
\label{sec:related}

\begin{figure}
\centering
	\includegraphics[width=0.4\textwidth, height=35mm, keepaspectratio=true]{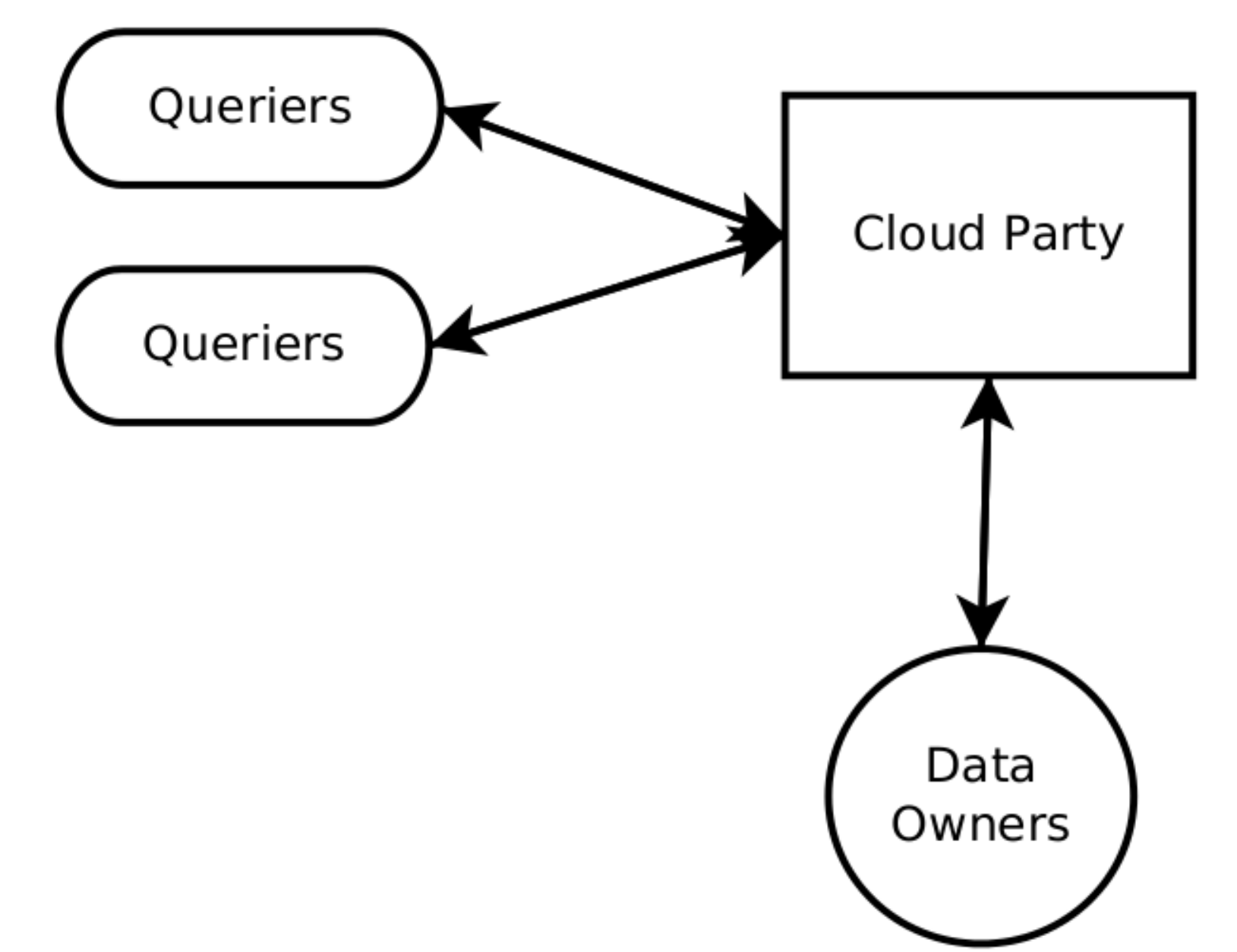}
	\caption{The general model of existing outsourced k-NN system. The data owner is trusted and outsources encrypted data to the cloud party. Queriers request k-NN computation from the cloud party, and are sometimes trusted depending on the prior work. The cloud party executes k-NN classification and is semi-honest.}
	\label{fig:systemold}
\end{figure}

Prior works have focused on two general approaches to achieving privacy in k-NN: distributed k-NN and outsourced k-NN computed on encrypted data.

In distributed k-NN, multiple parties each maintain their own data set. Distributed computation involves interactions between these parties to jointly compute k-NN without revealing other data values \cite{mpc-kNN, distributed-1, distributed-2, distributed-3,
distributed-4, distributed-6, aggregation-outsourcing, 10.1109/ICDMW.2006.3}. A general framework for how these systems operate is they iteratively reveal the next closest neighbor until $k$ neighbors have been determined. While allowing multiple parties to include their data sets, these works do not provide a solution for outsourced k-NN because they requires data owners to store and compute on their data. Furthermore, they must remain online for all queries.



\begin{table}[t]
	\centering
  \begin{tabular}{ | l | c | c | c | }
  \hline
  Prior Work & DO & Q & Cloud \\  \hline
  Wong et al \cite{wong} & Trusted & Trusted & Semi-honest \\  \hline
  Zhu et al \cite{no-share-key} & Trusted & Semi-honest & Semi-honest \\  \hline
  Elmehdwi et al \cite{arxiv-knn} & Trusted & Semi-honest & Semi-honest \\  \hline
  Xiao et al \cite{knn-revisted} & Trusted & Trusted & Semi-honest \\  \hline
  \end{tabular}
  \vspace{2mm}
  \caption{A summary of the trust models from prior private outsource k-NN works. DO is the data owner, Q is the querier, and Cloud represents any cloud parties. Semi-honest parties are assumed to follow protocols but may attempt to discover query or data values.}
  \label{table:priorworks}  

\end{table}

The other line of prior privacy-preserving k-NN work relies on computing k-NN over encrypted data. These systems are designed for outsourcing, as depicted in Figure \ref{fig:systemold}. Table ~\ref{table:priorworks} summarizes the trust model each work assumes. An important observation is that all prior works assume a trusted data owner. 

Wong et al. \cite{wong} provides secure k-NN outsourcing by developing an
asymmetric scalar-product-preserving encryption (ASPE) scheme. ASPE
transforms data tuples and queries with secret matrices that are inverses of each
other. Multiplying encrypted tuples and queries cancel the transformations to
output scalar products, used for distance comparisons. Hence, a data owner and queriers can upload encrypted data tuples and queries to a cloud party, who can compute k-NN using scalar products. It is worth noting
this encryption scheme is deterministic, so identical data tuples have
the same ciphertext and likewise for queries.
The secret matrices are symmetric
keys for the encryption scheme, and must be shared with both the data owner and
queriers. This approach assumes a trusted data owner and queriers, and a semi-honest cloud party who follows the protocols but attempts to learn query or data values. 

In the outsourced k-NN system from \cite{no-share-key}, data encryption is again conducted by a trusted data owner, using a symmetric scheme with a  secret matrix transformation as a key. However, queriers 
do not share this key. Instead, they interact
with the data owner to derive a query encryption without revealing the query. Note this requires a data owner to always remain online for all queries. Also, data tuple encryption (being a matrix transformation)
is deterministic, but query encryption is not due to randomness introduced during the query encryption protocol. The encryption
scheme is designed similiarly to ASPE and preserves distance, so a cloud party can execute k-NN using distances computed from encrypted data tuples and queries. In this system's trust model, the data owner is trusted while the queriers and the cloud party are semi-honest.

The system in \cite{arxiv-knn} is designed to protect data and query privacy with two cloud parties. One cloud party is the data host, who stores all uploaded (encrypted) data tuples. The other cloud party is called the key party, since it generates the keys for a public-key encryption scheme. Data tuples and queries are encrypted with the key party's public key. In this system, Paillier encryption is used for its partially homomorphic properties. 
For each k-NN execution, the system computes encrypted distances through interactions between the key party and the data host. The key party orders the distances and provides the data host with the indices of the k nearest neighbor data tuples to return to a querier. This work assumes a trusted data owner, and semi-honest queriers and cloud parties.

Instead of finding exact k-NN, \cite{knn-revisted} allows a cloud party to approximate it
using encrypted data structures uploaded by the data owner that represent Voronoi boundaries. The Voronoi boundary surrounding a data point is the boundary equidistant from that data point to neighboring points. The region enclosed in a boundary is the region within which queries will return the same 1-nearest neighbor. Because the Voronoi boundaries change whenever a new data point is added, this approach prevents continuous data uploading
without redoing the entire outsourcing process. The encryption scheme is
symmetric, and both the data owner and queriers share the secret key. This work's trust model is identical to \cite{wong}'s, where the cloud party is semi-honest and the data owner and queriers are fully trusted.

Another contribution from \cite{knn-revisted} is a reduction-based impossibility proof for privacy preservation in outsourced exact k-NN under a single cloud party model, where the cloud party has access to an encryption key. Note that \cite{wong} and \cite{no-share-key} assume the cloud party does not have the encryption key, hence avoiding this impossibility argument.  The proof is a reduction to order-preserving encryption (OPE), and leverages prior work that shows OPE is impossible under certain cryptographic assumptions \cite{Boldyreva}. Fully homomorphic encryption does actually allow OPE but it is still impractical \cite{fhe}. Let $B$ be an algorithm, without access to a decryption oracle, that finds the nearest neighbor of an encrypted query $E(q)$ in the encrypted data $E(D)$. The impossibility proof shows that $B$ can be used to construct an OPE function $\mathcal{E}()$, hence $B$ cannot exist. However, their argument does not apply to a system model with multiple cloud parties.
Their proof, which relies on  $B$'s lack of access to a decryption oracle, can be circumvented by providing access to the decryption function at another cloud party, such as in \cite{arxiv-knn}. Furthermore, OPE has been realized in an interactive two-party setting \cite{stOPE}. As later discussed in Section \ref{sec:attacks} and \ref{sec:alternatives}, we consider it reasonable that a cloud party has encryption capabilities in a multi-data owner model. Thus further exploration of private k-NN is needed for scenarios not within the scope of this impossibility result.

One important observation about these prior works is that they all exhibit linear complexity in computation (and network bandwidth in the case of \cite{arxiv-knn}). Intuitively, this is because the protocols are distance based. Since the distances from a query to all data tuples varies for each query, all distances must be recomputed per query. While there are techniques \cite{kdtrees, localityhashing} for reducing the computational complexity of k-NN, these techniques may not be privacy preserving. For example, these algorithms necessarily compute on only a portion of the data tuples near the query. Observing similar data tuple subsets used can leak the similarity of subsequent queries. However, future work may yield more efficient privacy preserving k-NN constructions.

\section{System and Threat Models for Multiple Data Owners}

All existing private outsourced k-NN systems assume a single trusted data owner entity. In this paper, we are the first to consider multiple mutually distrusting data owner entities. This is a simple and practical extension of existing models, yet has important implications. We explore the various threat models that can arise in such a scenario, but we focus the most attention on one threat model we find particularly realistic. The remaining threat models can be more easily dealt with, for example by extending existing single data owner systems, and will be discussed in detail in Section \ref{sec:alternatives}. In this section, we first discuss our model of a multi-data owner system. We then provide a framework for modeling threats in the system.

\subsection{K-NN System Model}
\label{sec:systemmodel}

\begin{figure}
\centering
	\includegraphics[width=0.3\textwidth, height=75mm, keepaspectratio=false, angle=270]{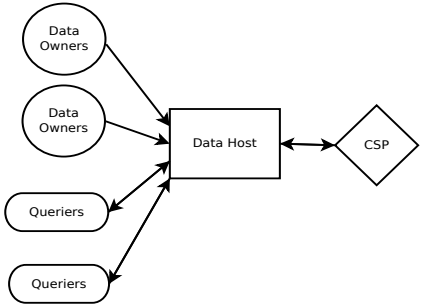}
	\caption{The model of an outsourced k-NN system containing four parties. The cloud parties are the data host and the crypto-service provider (CSP).}
	\label{fig:system}
\end{figure}

The privacy-preserving outsourced k-NN model has three types of parties: the data owners who outsource their data, the cloud party(s) who host the k-NN computation, and the queriers requesting k-NN classification. For emphasis, the key difference between our system model and prior models is the existence of multiple data owner entities. An immediate consequence of this modification is seen on the structure of the cloud parties. 
As discussed in Section \ref{sec:related}, results from \cite{knn-revisted} indicate that privacy-preserving outsourcing of exact k-NN cannot be achieved by a single cloud party without fully homomorphic encryption, which is impractical. Instead, at least two cloud parties should exist, one which stores encrypted data, and one with access to the associated decryption function that acts as a decryption oracle.
Hence our system model includes an additional cloud party we call the Cryptographic Service Provider (\csp) who generates a key pair for a public-key encryption scheme, and distributes the public key to data owners and queriers to use for encryption. The cloud party storing the encrypted data, termed the data host, is able to compute on encrypted data via interaction with the \csp. As depicted in Figure \ref{fig:system}, our system model involves multiple data owners and queriers, the data host, and the \csp.  Note that encrypted queries and data submissions must be sent to the data host, not the \csp, since the \csp\ can simply decrypt any received values.

\subsection{Threat Model Abstractions}
\label{sec:threatmodel}

We consider any party as potentially adversarial. Like prior works, we will consider semi-honest adversaries that aim to learn data or query values, possibly through collusion, while following protocols. We do not consider attacks where parties attempt to disrupt or taint the system outputs. Additionally, we must assume that the \csp\  and the data host do not directly or indirectly collude, since the \csp\ maintains the secret key to decrypt all data tuples on the data host. This is a reasonable assumption, for example, if the two cloud parties are hosted by different competing companies incentivized to protect their customers' data privacy.

To describe our threat models, we will take a slightly unorthodox approach. Instead of providing a model for each party's malicious actions, we model the malicious behavior of a party based on roles it possesses. The logic behind this approach is that different parties in our system model can pose the same threats because they possess the same roles. Enumerating and investigating all combinations of roles \textit{and} parties is redundant. Hence, just using roles provides a cleaner abstraction from which to analyze threats. Below we describe the four possible roles that arise in our system model.

\begin{itemize}

\item Data Owner Role: A party with this role may submit encrypted data into the system. We focus on misbehavior to compromise data privacy, and do not deal with spam or random submissions aimed at tainting system results. 
Note that the data owner role cannot compromise data privacy by itself since it does not allow observation of system outputs.

\item Querier Role: This role allows submission of encrypted queries to receive k-NN classification results.

We note that the querying ability can allow the discovery of the Voronoi cell surrounding a data point. The Voronoi cell is the boundary surrounding a point $p$ that is equidistant between $p$ and its neighbors. In the 1-NN case, queries within $p$'s Voronoi cell will return $p$'s classification. A query outside of the cell will return the classification of a neighboring point. Hence, changes in 1-NN outputs can signal a crossing over of a boundary. However, we deem this inherent leakage minimal since discovering the Voronoi cell would require numerous queries to reveal each boundary edge. The Voronoi cell also simply bounds the value of the data point, and does not directly reveal it. Furthermore, neighboring cells of the same class will appear merged in such an attack, since the output signal will not change between cells. K-NN with a large k parameter makes analysis even less accurate.

\item Data Host Role: The data host role can only be possessed by a cloud party. It allows storage and observation of incoming encrypted data tuples and queries, and the data host computation that is conducted. A holder of the role also observes any interactions with other parties.

\item \csp\ Role: Only a cloud party may possess this role. A \csp\ possesses both an encryption and decryption key, and can decrypt any data it observes. It interacts with a data host role to serve as a decryption oracle, and can observe any interaction with other parties, as well as the \csp
's own computation.

\end{itemize}

In Section \ref{sec:attacks}, we focus on the primary threat model in this paper, where any single party can possess both the data owner and querier role. In Section \ref{sec:alternatives}, the remaining threat models are explored. These are threats from one of the cloud parties possessing either the data owner role or the querier role, but not both. Also, we consider the case where the cloud parties possess neither roles.

\section{Attacks in the Data Owner-Querier Roles Threat Model}

\label{sec:attacks}

In this section, we consider the threat model where an adversarial party possesses both the data owner and querier roles (termed the DO-Q threat model). Since we consider all parties as potentially adversarial, any scenario where both roles may belong to a single party falls under this threat model. We argue this is a realistic threat model, and discover a set of conceptually simple attacks on any system regardless of system design or encryption scheme. The attacks work based purely on the mathematical nature of the k-NN output, rather than implementation specifics. Hence, we conclude that multi-data owner outsourced k-NN cannot be secured under such a threat model.

\subsection{DO-Q Threat Model}

The DO-Q threat model covers any situation where a single party can possess both the data owner and querier roles. We consider this a practical threat model because it can arise in numerous scenarios. Hence, we will focus on it in this section as well as Section \ref{sec:kernelregression}. The following outlines the possible scenarios where a single party may possess both roles:

\begin{itemize}

\item It is reasonable to expect that data owners, contributing their sensitive data, are  allowed access to querying. If not, there is less incentive for data owners to share data. For examples, hospitals might want to pool their data to create a disease classification system. The doctors at any participating hospital should be able to query the system when diagnosing their patients.

\item The data owners may not be explicitly given querying permissions, say if the data owners and queriers are separate institutions. However, miscreants in each party may collude together, providing a joint alliance with both roles.

\item If the data host can encrypt data tuples and queries, it can act as its own data owner and querier. It can insert its own encrypted data tuples into the system's data set, and compute k-NN using its own encrypted queries. A data host with access to the encryption keys is not unreasonable. In \cite{arxiv-knn}, the data host needs to encrypt values to carry out operations on encrypted data tuples. In addition, the queries are encrypted under the same key as data tuples to allow homomorphic operations.

\item If the data host lacked the data owner and/or querier roles (e.g., if an encryption key was kept secret from it), it may still collude with a data owner and/or querier to obtain the roles. This collusion can also occur between the \csp\ with a data owner and querier.

\item If the system is public to data submissions, queries, or both, then any party can supplement their current roles with the public roles. For example, a fully public system allows anyone to obtain both the data owner and querier role.

\end{itemize}

\subsection{Distance-Learning Attacks in the DO-Q Model}

We now present a set of attacks that reveal distances between a query and encrypted data tuples, allowing triangulation of the plaintext data values. We begin by presenting attacks under the  simpler 1-NN, and gradually progress towards k-NN. We assume that k-NN does not output the neighboring tuples directly, which provides arbitrary privacy breaches through querying, but rather just the query's predicted classification.

We note that prior work \cite{esa_attacks} has developed algorithms for cloning the Voronoi diagram of 1-NN using only queries if the query responses contain the exact location of the nearest neighbor or the distance and label of the nearest neighbor. 
If only the nearest neighbor label is returned, then the algorithm provides an approximate cloning.
Our attacks differ in that they are structurally very different, we look at k-NN beyond 1-NN, our attacks reveal the exact data value of a target tuple rather than the Voronoi diagram, and our attacks leverage data insertion as well as querying. Also we focus on a system model where query responses do not contain the distance (which we considered an already broken construction). The algorithms in \cite{esa_attacks} require at least distance in the query response to conduct exact cloning.

\subsubsection{Attack using Plaintext Distance}

We begin by considering a broken construction of outsourced k-NN. The k-NN system must calculate the distances from all data tuples to the query point. If these distances are ever visible in plaintext to a party with just the querier role, then it is simple to discover the true value of any encrypted data tuple.

Knowing a query $q$, the adversary can observe the distance $l$ from $q$ to the nearest neighbor. This forms a hypersphere of radius $l$ around $q$. If the data is $d$ dimensions, $d+1$ different query points with the same nearest neighbor constructs $d+1$ hyperspheres, which will intersect at one unique point, the nearest neighbor's plaintext value. Hence, we can uncover data even though it is encrypted in the data set. Note that in this case, the adversary needs both a querier role as well as the cloud party role that observes plaintext distances. This is different from the threat model we are considering, but we discuss it as it provides insight for the following attacks.

\subsubsection{Attack on 1-NN}

Now consider the 1-NN scenario where the system does not foolishly reveal the distance in plaintext. An adversary with the data owner and querier roles can still breach the privacy of the system. The querier role allows the adversary to submit a query $q$, and observe the classification $C$ of its nearest neighbor $p$. The attacker, using its data owner role, can then insert an encrypted ``guess'' data tuple $E(g)$, with any classification $C' \neq C$. If the new nearest neighbor classification returns $C'$, we know $p$ is farther away from $q$ than $g$. If not, then $p$ is closer. Using this changing 1-NN output signal, the adversary can conduct binary search using additional guess tuples to discover the distance from $q$ to $p$. Hence, the distance is revealed and triangulation can be conducted as in the insecure previous case. This takes $O((d+1) \log D)$ guesses in total to compromise a data tuple, where $d$ is the data dimensionality and $D$ is the distance from $q$ to the initial guess insertion. 

Note the above attack appears to require tuple deletions or updates, without which guess tuples that are closer than $p$ will disallow continued binary search. Deletions or updates will certainly improve attack performance, and is reasonable to allow in many scenarios (e.g., location data which is constantly changing). However, it is not required. First, the attacker could conduct a linear search, rather than binary, starting with a guess at distance $D$ and linearly decreasing the guess distance until equal to the distance between $q$ and $p$. Alternatively, a too-near guess still narrows down the range in which $p$'s value can be located, and the adversary can restart a search with a new query in that range.

\subsubsection{Attack on k-NN}

Consider now attacks on k-NN, instead of 1-NN. In this scenario, the k-NN system returns the classes of all $k$ nearest neighbors. Let $p$ of class $C$ be the nearest neighbor to query $q$. In fact, k-NN can be reduced to 1-NN by inserting $k-1$ data tuples of class $C' \neq C$ adjacent to $q$. Then, $p$ is the only unknown data tuple whose classification is included in the k-NN output set. The attacker with data owner and querier roles can test guess data insertions of class $C'$ until $C$ is no longer included in the k-NN output, indicating the guess tuple is closer than $p$. This signal can be used as before to conduct a distance search.

\subsubsection{Attack on Majority-Rule k-NN}

The final scenario for a k-NN system is one that operates on majority rule. Rather than outputting all classifications from the $k$ nearest neighbors, only the most common classification is returned.
Again, data owner and querier privileges allow a distance-learning attack by reducing to 1-NN. Let $p$ of class $C$ be the nearest neighbor. The attack can insert $k-1$ data tuples adjacent to a query $q$, split evenly over all classes such that the output of k-NN solely depends on the classification of $p$. For example, if we have binary classification (0 and 1) and 3-NN, the attacker can insert a 0-class tuple and a 1-class tuple adjacent to $q$. The original nearest neighbor ($p$) now solely determines the output of 3-NN, and the adversary may again use a change in k-NN output as a signal for a distance search.

\subsection{Fundamental Vulnerability}

The vulnerability exposed by these attacks is fundamental to exact k-NN. For any given query, k-NN outputs a function evaluated on the subset of tuples within a distance $d$, where $d$ is the distance of the $k$-th nearest neighbor. Representing k-NN as kernel density estimation using a uniform kernel, $d$ is the kernel's width. The vulnerability above fundamentally relies on the fact that the kernel's width changes depending on which data points are near the query.
An adversary can learn distances through guess insertions and an observation of k-NN output change. Hence, in any scenario where a party possesses both the data owner and querier roles, exact k-NN cannot be secure, regardless of implementation details. An alternate or approximate solution must be used. In Section \ref{sec:kernelregression}, we propose a privacy-preserving scheme using a similar algorithm from the same family as k-NN.

\section{Privacy-Preserving Kernel Density Estimation}
\label{sec:kernelregression}

In Section \ref{sec:attacks}, we demonstrated that data privacy can be breached in any multi-data owner exact k-NN system under the DO-Q threat model. Given the practicality of this scenario, particularly since the data host typically will have those roles, we seek to provide a secure alternative to exact k-NN. In this section, we propose a privacy-preserving system using an algorithm from the same family as k-NN.

In particular, k-NN is a specific case of kernel density estimation (as discussed in Section \ref{sec:background-kde}). Intuitively, the kernel measures the amount of influence a neighbor should have on the query's final classification. For a given query, kernel density estimation computes the sum of the kernel values for all data points of each class. The classification is the class with the highest sum. K-NN uses a kernel that is uniform for the $k$ nearest neighbors, and zero otherwise. We propose substituting this kernel with another common kernel, the Gaussian kernel:
\begin{align*}
  K(\|\mathbf{q} - \mathbf{x}_i\|) &=  \frac{1}{\sigma\sqrt{2\pi}}e^{-\frac{1}{2\sigma^2}\|\mathbf{q} - \mathbf{x}_i\|^2}
\end{align*}

In this equation, $\mathbf{q}$ is the query tuple, $\mathbf{x}_i$ is the $i$-th data tuple, $\sigma$ is a parameter (chosen to maximize accuracy using standard hyperparameter optimization techniques, such as cross-validation with grid search), and $\| \cdot \|$ is the L2 norm. As the distance between $q$ and $x_i$ increases, the kernel value (and influence) of $x_i$ decreases.
K-NN allows the $k$ nearest neighbors to all provide equal influence on the query point's classification. When using a Gaussian kernel, all points have influence, but points farther away will influence less than points nearby. While this approach is \textit{not} equivalent to exact k-NN due to the non-uniformity of neighbor influences, it is similar and from the same family of algorithms. Later, we show experimentally that it can serve as an appropriate substitute in order to provide privacy preservation.

\subsection{Why the Gaussian Kernel?}
\label{sec:intuition}

The Gaussian kernel for kernel density estimation is specifically chosen to provide defense against the adaptive k-NN distance-learning attacks under the DO-Q threat model (Section \ref{sec:attacks}). Here, we briefly provide the intuition for why this is true, with the formal proof in Section \ref{sec:secanalysis}. 

In k-NN, the width of the kernel depends on the distance of the $k$th nearest point from the query, so that only the $k$ nearest neighbors influence the classification. An adversary with the data owner role can change this set of neighbors by inserting data tuples within the kernel width, and manipulate whether a particular point influences the classification.  Distances can be learned by using changes in this subset as a signal. When using the Gaussian kernel, which has an infinite kernel width, any data the adversary inserts will not change the influence of other points, and the outcome will be affected by exactly an amount the adversary already can compute. For example, for a given query $q$, an adversary inserting a point $y$ knows the influence for $y$'s class will increase by exactly $K(\|\mathbf{q} - \mathbf{y}\|)$. Hence, the adversary learns nothing new from possessing the data owner role and querying. 

It is important to realize that this security arises from the Gaussian kernel's unvarying and infinite width. 
The choice of an alternative kernel is non-trivial and must be carefully selected. For example, another viable substitute is the logistic kernel $K(u) = \frac{1}{e^u+2+e^{-u}}$, since it too has an unvarying and infinite width. Our decision to use the Gaussian kernel was based on the ease of developing a scheme using partially homomorphic cryptographic primitives.

\subsection{A Privacy-Preserving Design}
\label{sec:protocol_design}

In this section, we will step-by-step describe the construction of a privacy-preserving classification system using kernel density estimation with a Gaussian kernel. As we will demonstrate, our protocols provide classification without leaking data or queries. Our system will follow the same system model as described in Section \ref{sec:systemmodel}. 

\subsubsection{Setup}
Recall that the system model provides outsourcing data storage and computation to a data host and a cryptographic service provider. The cryptographic service provider generates a public/private key pair for the Paillier cryptosystem and distributes the public key to data owners and queriers, who use it to encrypt data they submit to the cloud, as well as the data host. Data owners submit their data in encrypted form to the data host. Note that while we use the Paillier cryptosystem, other similarly additive homormophic cryptosystems may be appropriate as well.

\subsubsection{Computing Squared Distances}

Since a kernel is a function of distance, our system must be able to compute distance using encrypted data without revealing it. Algorithm \ref{alg:sqdist} describes $SquaredDist(E(a),E(b))$, which allows the data host to compute the encrypted squared distance between $a$ and $b$ given only their ciphertexts. The protocol does not leak the distance to either cloud parties, to prevent a distance-learning attack. Only squared distances are required as they are used in the Gaussian kernel calculation.

Assume our tuples are $m$ dimensions, and $a_i$ is the $i$-th feature of tuple $a$. $E$ and $E^{-1}$ are the Paillier encryption and decryption functions, respectively, using the previously-chosen public/private key pair. Note all Paillier operations are done modulo the Paillier parameter $N$, but for simplicity we elide the modulus.  Also, we abbreviate the data host as DH, and the cryptographic service provider as \csp.

\begin{algorithm}[t]
\caption{$SquaredDist(E(a), E(b))$: Output $E(\|a-b\|^2)$}
\begin{algorithmic}
\item[1.] $DH$: \textbf{for} $1 \leq i \leq m$ \textbf{do:}
\item[\hspace{3mm} (a)] $x_i \leftarrow E(a_i) \cdot E(b_i)^{-1}$. Thus, $x_i = E(a_i - b_i)$.
\item[\hspace{3mm} (b)] Choose random $\mu_i \in \mathbb{Z}_N$. 
\item[\hspace{3mm} (c)] $y_i \leftarrow x_i \cdot E(\mu_i)$, such that $y_i = E(a_i - b_i + \mu_i)$.
\item[\hspace{3mm} (d)] Send $y_i$ to $CSP$.
\item[2.] $CSP$: \textbf{for} $1 \leq i \leq m$ \textbf{do:}
\item[\hspace{3mm} (a)] $w_i \leftarrow E((E^{-1}(y_i))^2)$.
\item[\hspace{3mm} (b)] Send $w_i$ to $DH$.
\item[3.] $DH$: \textbf{for} $1 \leq i \leq m$ \textbf{do:}
\item[\hspace{3mm} (a)] $z_i \leftarrow w_i \cdot x_i^{-2\mu_i} \cdot E(-\mu_i^2)$.
\item[4.] $DH$: Encrypted squared distance $E(\|a-b\|^2) = \Pi_{i=1}^m z_i$.
\end{algorithmic}
\label{alg:sqdist}
\end{algorithm} 

Step 1 computes the encryption of $d_i = (a_i-b_i)$ and additively masks each with a random secret $\mu_i$, such that $E^{-1}(y_i)=d_i+\mu_i$ $mod$ $N$. This prevents the \csp\ from learning $d_i$ in step 2. The \csp\ decrypts $y_i$, squares it, and sends it back to the DH the encryption of the square. Note $w_i = E((E^{-1}(y_i))^2) = E(d_i^2 + 2 \mu_i d_i + \mu_i^2)$. The DH can encrypt $\mu_i^2$ and compute $E(2\mu_i d_i)$ as $x_i^{2\mu_i}$, allowing it to recover $E(d_i^2)$ in step 3. Finally, the DH can compute $E(d^2) = E(\Sigma_{i=1}^m d_i^2) = \Pi_{i=1}^m E(d_i^2)$, as in step 4. This provides the DH with encrypted squared distances without revealing any information to either cloud parties.

\subsubsection{Computing Kernel Values}

Now that we can compute squared distances, we must compute the Gaussian kernel values in a secure fashion. Algorithm \ref{alg:kernel} does so such that the \csp\ obtains a masked kernel value for each data tuple.

\begin{algorithm}[h!]
\caption{$KernelValue(q)$: Compute Gaussian kernel values for data tuples $t$ in data set $D$ given query $q$}
\begin{algorithmic}
\item[1.] $DH$: \textbf{for} $1 \leq i \leq |D|$ \textbf{do:}
\item[\hspace{3mm} (a)] $s_i \leftarrow SquareDist(t_i, q) = E(\|t_i - q\|^2)$
\item[\hspace{3mm} (b)] Choose random $\mu_i \in \mathbb{Z}_N$
\item[\hspace{3mm} (c)] $e_i \leftarrow s_i \cdot E(\mu_i)$. Thus, $e_i = E(\|t_i - q\|^2 + \mu_i)$.
\item[\hspace{3mm} (d)] Send $e_i$ to $CSP$.
\item[2.] $CSP$: \textbf{for} $1 \leq i \leq |D|$ \textbf{do:}
\item[\hspace{3mm} (a)] $g_i \leftarrow \frac{1}{\sigma\sqrt{2\pi}}e^{-\frac{1}{2\sigma^2} E^{-1}(e_i)}$
\end{algorithmic}
\label{alg:kernel}
\end{algorithm} 

In step 1, the DH additively masks each computed encrypted squared distance by adding a different random mask. These values are sent to the \csp\, who in step 2 decrypts and computes the kernel values using the masked squared distances. Note that since each distance is masked  with a random value, the \csp\ does not learn any distances, even if the \csp\ knows some data values in the data set. The \csp\ will not be able to determine which masked values correspond to its known values. Again, no cloud party learns any distances, and they observed only encrypted or masked values. The query is only used in this algorithm, without ever being decrypted. Hence query privacy is achieved.
\newpage

\subsubsection{Computing Classification}

The final step is to conduct kernel density estimation for classification by determining which class is associated with the largest sum of kernel values. Our scheme represents classes not as a single value, but rather as a vector. If there are $c$ classes, we require a set of $c$ orthonormal unit-length bases, where each basis is associated with a particular class. When uploading a data tuple, a data owner uploads the encrypted basis associated with the tuple's class.

We can compute classification as described in Algorithm \ref{alg:class}. This algorithm is run after Algorithm \ref{alg:kernel}, such that the \csp\ knows the masked Gaussian kernel values for all data tuples.  Let $c_i$ represent the classification vector for the $i$-th data tuple.

\begin{algorithm}[ht!]
\caption{$Classify(q)$: Output a classification prediction for query $q$ where there are $c$ classes}
\begin{algorithmic}
\item[0.] $DH$: Run $KernelValue(q)$
\item[1.] $DH$: \textbf{for} $1 \leq i \leq |D|$ \textbf{do:}
\item[\hspace{3mm} (a)] Generate a random invertible matrix $B_i$ of size $c \times c$.
\item[\hspace{3mm} (b)] $v_i = B_i \times E(c_i)$.
\item[\hspace{3mm} (d)] Send $v_i$ to $CSP$.
\item[2.] $CSP$: \textbf{for} $1 \leq i \leq |D|$ \textbf{do:}
\item[\hspace{3mm} (a)] $w_i = v_i^{g_i}$ , where $g_i$ are kernel density values.
\item[\hspace{3mm} (b)] Send $w_i$ to $DH$.
\item[3.] $DH$: \textbf{for} $1 \leq i \leq |D|$ \textbf{do:}
\item[\hspace{3mm} (a)] $w'_i \leftarrow B_i^{-1} \cdot w_i^{e^{\frac{1}{2\sigma^2}\mu_i}}$, where $\mu_i$ is the random distance masking for the $i$-th tuple (see Algorithm \ref{alg:kernel} step 1(c))
\item[4.] $DH$: $A \leftarrow \Sigma_{i=1}^{|D|} w'_i$. Note that this is a $c \times 1$ vector.
\item[5.] $DH$: \textbf{for} $1 \leq i \leq c$:
\item[\hspace{3mm} (a)] Choose new $\mu_i \in \mathbb{Z}_N$ randomly.
\item[\hspace{3mm} (b)] $A_c \leftarrow A_c + \mu_i$, where $A_c$ is the $c$-th component of $A$.
\item[6.] $DH$: Let $f(in_1,...,in_c, \mu_1,...,\mu_c) $\\$\hspace{15mm}= \underset{k}{\arg\max} (in_k - \mu_k \mod N)$. 
\item[\hspace{3mm} (a)] Generate the garbled circuit $GF_f$ of $f$ and the garbled inputs $GI(\mu_i )$ $\forall i \in \{1,...,c\}$. 
\item[\hspace{3mm} (b)] Send $GF_f$, $GI(\mu_i)$ and $A_i$ $\forall i \in \{1,...,c\}$ to $CSP$.
\item[7.] $CSP$: 
\item[\hspace{3mm} (a)] $A'_i \leftarrow E^{-1}(A_i)$ $\forall i \in \{1,...,c\}$.
\item[\hspace{3mm} (b)] Conduct oblivious transfers with $DH$ to obtain $GI(A'_i)$ $\forall i \in \{1,...,c\}$
\item[\hspace{3mm} (c)] $gout \leftarrow GF_f(GI(A'_1),...,GI(A'_c),GI(\mu_1),...,GI(\mu_c))$.
\item[\hspace{3mm} (d)] Send $gout$ to $DH$.
\item[8.] $DH$: Map $gout$ to its non-garbled value $out$ and return to the querier. $out$ is the number representing the predicted class.
\end{algorithmic}
\label{alg:class}
\end{algorithm}

In step 1, the encrypted classification vectors are masked through matrix multiplication with a random invertible matrix $B_i$, which can be efficiently constructed \cite{random_matrix}. Note this computation can be conducted on encrypted data because the DH knows the values of $B_i$, and the matrix multiplication involves only multiplying by plaintext constants and adding ciphertexts. The masked classification vectors are sent to the \csp\ in step 2, who scales them by the kernel values (from Algorithm \ref{alg:kernel}). In step 3, the DH is returned these values. The DH undoes the transformation by multiplying by $B_i^{-1}$. The distance masks in the Gaussian kernel from Algorithm \ref{alg:kernel} are removed by multiplying by $e^{\frac{1}{2\sigma^2}\mu_i}$, where $\mu_i$ is the mask for the $i$-th tuple. Now, $w'_i$ is the encrypted basis for the $i$-th tuple's class, scaled by the kernel value. In step 4, the sum of these scaled encrypted bases will sum the kernel values for each class in the direction of the classes' bases, forming the vector $A$. Note that $A$ is still a vector of encrypted values though.

At this point, we need to determine the class with the highest kernel values summation. Having the \csp\ decrypt $A$ and return the kernel value sums for the classes may appear straightforward, but in fact leads to an attack on future inserted data tuples if an adversary can view the kernel value sums and can continuously query $q$. When the next data tuple $t$ is inserted, the adversary will observe an increase in the kernel values summation for $t$'s class by $K(\|t-q\|^2)$. Knowing $q$ and $K(\|t-q\|^2)$, the adversary can compute $\|t-q\|$. Assuming the data has $m$ features, the attacker can determine $t$ using a set of $(m+1)$ queries both before and after insertion to compute $(m+1)$ distances. Note that our threat model allows for one of the cloud parties to possess the querying role, so whichever party can observe the kernel value sums can conduct this attack. Thus, the protocol must determine the classification without revealing the kernel value sums to any party. 

We accomplish this using a garbled circuit for the function $f(in_1,...,in_c, \mu_1,...,\mu_c) = \underset{k}{\arg\max} (in_k - \mu_k \mod N)$. If there are $c$ classes, this function accepts $c$ masked inputs $in$ and $c$ additive masks $\mu$, unmasks each masked input with its associated mask, and returns the index corresponding to the largest unmasked value. If the masked inputs are masked kernel value summations for each class, $f$ returns the class index with the kernel values summation. In step 5, the DH additively masks the (encrypted) kernel values summation for each class. Then in step 6, the DH generates the garbled circuit $GF_f$ for $f$ and garbles its own inputs, which are the masks for each class's kernel values summation. Note that $f$ is not that complex of a function (for example, compared to a decryption function) and its garbled circuit can be practical implemented using existing systems \cite{yao_implementation}. 
The DH sends $GF_f$, its garbled inputs, and the masked (still encrypted) $A$ to the \csp. The \csp\ decrypts the masked $A$ vector in step 7(a). By conducting oblivious transfer in step 7(b) with the DH to obtain the garbled input associated with each $A_i$, the $CSP$ does not reveal the masked kernel value sums. With all of the inputs now to $GF_f$, the \csp\ evaluates the garbled circuit and returns the garbled output to the DH. Because the DH created the circuit, it can map the garbled output back to its non-garbled value, which is the class index returned by $f$ with the largest kernel values summation. This classification is exactly the classification output of kernel density estimation using a Gaussian kernel.

\subsection{Security Analysis}
\label{sec:secanalysis}

First we must show that use of the Gaussian kernel by our scheme is resistant to the adaptive distance-learning attacks detailed in Section \ref{sec:attacks}. While Section \ref{sec:intuition} provide an intuitive argument for this, here we present a formal proof.

\begin{theorem}
\label{proof:nothing}
Under the Gaussian kernel in kernel density estimation, the adversary learns nothing from the distance-learning attacks of Section \ref{sec:attacks}. More specifically, the adversary does not learn the distance from a query output by adding to the data set (or removing her own data tuple).
\end{theorem}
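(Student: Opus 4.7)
The plan is to reduce to a single structural property of the Gaussian kernel: because $K$ has infinite support, no tuple ever enters or leaves the set of summands that contribute to any class total, regardless of what the adversary inserts or removes. I would first fix notation: let $D = D_1 \sqcup \cdots \sqcup D_c$ be the class-partitioned data set, write $S_i(q) = \sum_{x \in D_i} K(\|q - x\|)$ for the Gaussian class sum, and let the classifier return $\mathrm{cls}_D(q) = \arg\max_i S_i(q)$. The adversary's attack is modeled as an adaptive sequence of queries $q_t$ interleaved with insertions and removals of tuples $g_t$ that the adversary itself selected; the adversary's view is just the sequence of outputs $\mathrm{cls}_{D_t}(q_t)$ on the current data set $D_t$.

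Next I would compute how the sums evolve. Inserting a tuple $g$ of class $C'$ changes only $S_{C'}$ and changes it by exactly $K(\|q - g\|)$; removing a tuple $g$ the adversary previously inserted changes $S_{C'}$ by $-K(\|q - g\|)$; no other summand is created or destroyed because the kernel support is all of $\mathbb{R}$. Crucially, $K(\|q - g\|)$ depends only on $q$ and $g$, both of which the adversary chose. Therefore each $\mathrm{cls}_{D_t}(q_t)$ the adversary observes is a deterministic function of the original aggregate sums $S_1(q_t), \ldots, S_c(q_t)$ together with the adversary's own inputs; no individual honest distance $\|q_t - x\|$ for $x \in D$ ever enters the adversary's view except through these aggregates.

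From this, the simulation argument is immediate: a simulator that is given only an oracle for the honest class sums $q \mapsto (S_1(q), \ldots, S_c(q))$ of the untouched $D$ can perfectly reproduce the adversary's entire transcript, since all changes induced by the adversary's own insertions and removals are recomputable from quantities the adversary already controls. I would then contrast this with the k-NN attack of Section~\ref{sec:attacks}: there, inserting a guess tuple $g$ inside the top-$k$ radius \emph{evicts} an unknown honest tuple $t$ from the neighbor set, and the resulting change in the output is a direct comparison of $\|q - g\|$ against $\|q - t\|$, yielding a binary-search bit on $\|q - t\|$. Under the Gaussian kernel no such eviction can occur, so the analogous attacker difference test collapses to a threshold test between the known quantity $K(\|q - g\|)$ and the (data-dependent) gap among aggregate sums, never singling out any particular tuple.

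The main obstacle I anticipate is drawing a clean line between aggregate leakage, which is intrinsic to publishing any classifier output and is already available to the adversary through queries alone, and per-tuple distance leakage, which is what the k-NN attack actually extracted. Handling it requires stating the theorem precisely as ``no honest-tuple distance $\|q - t\|$ is revealed'' and arguing that the simulator's sum-only oracle factors the adversary's view through an aggregate that is many-to-one in the underlying distances: without additional structure, the adversary's side channel produces information only about sums of kernel values, and hence cannot isolate the contribution of any particular honest tuple. The Gaussian-specific infinite-support observation is what makes this factoring hold at every step of an adaptive sequence, whereas under the uniform k-NN kernel the factoring fails precisely when an eviction occurs.
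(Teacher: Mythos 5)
Your proposal is correct and takes essentially the same route as the paper's proof: both rest on the single computation that inserting (or removing) an adversary-chosen tuple $d_A$ shifts only its own class's kernel-value sum, by the exactly known amount $\delta = \frac{1}{\sigma\sqrt{2\pi}}e^{-\frac{1}{2\sigma^2}\|d_A - q\|^2}$, while the infinite width of the Gaussian kernel leaves every other tuple's contribution unchanged, so any observed output change is fully attributable to a quantity the adversary could compute without querying. Your explicit simulator with a class-sum oracle and the treatment of adaptive insert/remove sequences are a formalization of the paper's informal closing claim (that the post-insertion output leaks nothing beyond what the query output leaks inherently), not a different decomposition or key lemma.
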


\begin{proof}
    Let $D=\{d_0, ..., d_n\}$ be the current data set, $q$ be the adversary's desired query point, and $d_A$ be any data tuple the adversary could insert. Also define $GK_j(D, q)$ to be the kernel value sums for the $j$-th class given the query $q$ and data set $D$, and $C(d_i)$ to be the $i$-th data tuple's classification. Recall that if the system allows for $c$ classes, the query output is $\underset{j \in \{1,...,c\}}{\arg\max} (GK_j(D,q))$.

    Before tuple insertion, for each class $j$ in $\{1,...,c\}$: 
\begin{align*} 
    GK_j(D, q) = \Sigma_{i=1}^{|D|} \frac{1}{\sigma\sqrt{2\pi}}e^{-\frac{1}{2\sigma^2} (\|d_i - q\|^2)} \mathbbm{1}_{C(d_i)=j}
\end{align*}
Here, $\mathbbm{1}_{C(d_i)=j}$ is an indicator variable for whether the $i$-th data tuple $d_i$ is of class $j$. 
After inserting tuple $d_A$, for each class $j$ in $\{1,...,c\}$:
\begin{align*} 
    GK_j(D \cup \{d_A\}, q) = \Sigma_{i=1}^{|D|} \frac{1}{\sigma\sqrt{2\pi}}e^{-\frac{1}{2\sigma^2} (\|d_i - q\|^2)} \mathbbm{1}_{C(d_i)=j} \\ + \frac{1}{\sigma\sqrt{2\pi}}e^{-\frac{1}{2\sigma^2} (\|d_A - q\|^2)} \mathbbm{1}_{C(d_A)=j}\\ = GK_j(D, q) + \frac{1}{\sigma\sqrt{2\pi}}e^{-\frac{1}{2\sigma^2} (\|d_A - q\|^2)} \mathbbm{1}_{C(d_A)=j}
\end{align*}

The adversary, through using its data owner role, can cause a change of $\delta = \frac{1}{\sigma\sqrt{2\pi}}e^{-\frac{1}{2\sigma^2} (\|d_A - q\|^2)} \mathbbm{1}_{C(d_A)=j}$. 
However, the Gaussian kernel values have not changed for any other data tuples. 
The change in a query output is directly and completely a result of $\delta$, which the adversary knows independent of querying since it knows $d_A$, $q$, and $\sigma$ (which can be public).
Given the query output is based only on hidden Gaussian kernel value sums, the output after insertion does not leak any more information on an individual tuple's kernel value beyond what the query output leaks inherently (without insertion).
Because the distance is only used in the Gaussian kernel, and the adversary learns nothing about other tuples' Gaussian kernel values through insertion, it does not learn the distance from insertion.

If the adversary deleted one of her tuples, the outcome is exactly the same except $\delta$ is negative. With a semi-honest adversary, we can assume the adversary only deletes her own tuples.
\end{proof}

The next analysis is to show our scheme leaks no further data beyond its output. More formally, our scheme leaks no additional information compared to an ideal scheme where the data is submitted to a trusted third party, who computes the same classification output. In our initial discussion of the protocol design in Section \ref{sec:protocol_design}, we justified the security along this front for each step of the construction. See Appendix \ref{sec:additional_security_analysis} \fullpaperappendix for a more formal proof.

A final analysis that could be conducted is to formally characterize what the kernel density estimation output itself reveals about the underlying data. One way is to use the framework of differential privacy~\cite{dwork2006differential}, which can tell us how much randomness we need to add to the output in order to achieve a specific quantifiable level of privacy as defined by the framework.

We say that a randomized algorithm $A$ provides $\epsilon$-differential privacy if for all data sets $D_1$ and $D_2$ differing by at most one element, and all subsets $S$ of the range of $f$, \[\Pr(A(D_1) \in S) \le \exp(\epsilon) \cdot \Pr(A(D_2) \in S).\]
We also define the \emph{sensitivity} of a function $f: \mathcal{D} \rightarrow \mathbb{R}^m$ as \[\Delta f = \max_{D_1, D_2} \|f(D_1) - f(D_2)\|_1.\]
Then \cite{dwork2006differential} shows that, if we add Laplacian noise with standard deviation $\lambda$, as in
\[\Pr(A_f(D) = k) \propto \exp(-\|f(D) - k\|/\lambda),\] $A_f$ gives $(\Delta f / \lambda)$-differential privacy.

In our case, we can view $f$ as performing a kernel density estimation query with a particular query point on a set of data points, to receive a score for each class.
$A_f$ would be a randomized version of $f$, which takes these scores from $f$ and adds Laplacian noise to each one.
The sensitivity is $\Delta f = \frac{1}{\sigma \sqrt{2\pi}}$, as follows from the proof of Theorem~\ref{proof:nothing} when $\|d_A - q\| = 0$, and so using $A_f$ would grant us $\frac{\sigma \sqrt{2\pi}}{\lambda}$-differential privacy.
We can see that a larger $\sigma$ in the Gaussian kernel grants us greater differential privacy (i.e., leads to a higher value of $\epsilon$) when the magnitude of the noise remains fixed.
This follows the intuition that a larger $\sigma$ in the kernel increases the influence of far-away points in the data set, and so the result is not as effected by any particular point.

While we proved that the Gaussian kernel output does not leak data tuple values through the distance-learning attacks of Section \ref{sec:attacks}, we have not yet developed a framework to analyze other leakage channels other than the preliminary analysis above. Such a problem is challenging and we leave it to future work. However, even in the face of potential leakage, data perturbation provides some security, although in exchange for output accuracy. 

\subsection{Performance Considerations}

Our kernel density solution requires $O(N)$ computation and communication overhead, where $N$ is the number of data tuples in the data set. On large data sets, we acknowledge this cost might be impractical. Future work remains to find privacy-preserving constructions with optimizations. Our goals with the presented construction are to both demonstrate that there exists a theoretically viable privacy-preserving alternative to k-NN, and lay the groundworks for future improvements. Note that existing private k-NN schemes also have linear performance costs, as discussed in Section \ref{sec:related}.

One solution for improving performance is parallelism, since our scheme operates on each data tuple independent of other tuples. Ideally, multiple machines would run in parallel at both cloud parties, providing parallelism not only in computation but also network communication. Each machine at the data host would compute a subcomponent of the kernel value sums, and one master node can aggregate these subcomponents and execute Yao's garbled circuit protocol to produce the classification. Our protocol was intentionally designed so that the function circuit that is garbled is of limited complexity, without need for complex sub-routines such as decryption. Thus, Yao's protocol can be implemented efficiently \cite{yao_implementation}.

K-NN suffers this linear-growth complexity as well, however approximation algorithms significantly improve performance by reducing the number of points considered during neighbor search. For example, k-d trees \cite{kdtrees} divide the search space into smaller regions, while locality hashing \cite{localityhashing} hashes nearby points to the same bin. Hence, search for a query's neighbors involves computation over the data points in the same region or bin, rather than the entire data set.
Unfortunately, it is not obvious how to execute approximate k-NN in a privacy-preserving fashion. 
Similarly, there must be future work on similar optimizations and approximations for kernel density estimation.

\section{Comparing k-NN and Gaussian Kernel Density Classification}

\begin{table}[b!]
  \centering
  \begin{tabular}{c|ccc}
    \textbf{Data Set} & \textbf{k-NN} & \textbf{KDE} & \textbf{Agreement} \\ \hline
    Cancer 1 & 97.85\% & 97.14\% & 99.28\% \\
    Cancer 2 & 96.49\% & 96.49\% & 98.24\%\\
    Diabetes & 81.82\% & 74.68\% & 87.66\% \\
    MNIST & 97\% & 96\% & 99\%
  \end{tabular}
  \vspace{2mm}
  \caption{Empirical results for k-NN versus kernel density estimation (KDE) with a Gaussian kernel. Using several data sets, we evaluate each algorithm's classification accuracy as well as the degree of agreement between the two algorithms' outputs.}
  \label{table:knn-vs-kde}
\end{table}

In this section, we examine the differences between classification with k-NN and with Gaussian kernel density estimation. We empirically evaluated the performance of using both algorithms, to show they can give similar results on real data. Appendix \ref{sec:practical_differences} \fullpaperappendix expands on this evaluation, discussing a number of additional considerations when using Gaussian kernel density estimation. We construct hypothetical data sets where we can expect them to behave differently, and discuss the underlying causes for the discrepancy.
We also discuss some practical issues specific to using a Gaussian kernel and how to solve them.

We summarize the results in Table~\ref{table:knn-vs-kde}. The medical data comes from the UCI Machine Learning Repository~\cite{Bache+Lichman:2013}, a collection of real data sets used for empirical evaluations of machine learning algorithms.
``Cancer 1'' and ``Cancer 2'' contain measurements from breast cancer patients conducted at the University of Wisconsin. ``Cancer 1'' contains 9 features for each patient; ``Cancer 2'' came several years after ``Cancer 1'' and contains 30 features which are more fine-grained.
``Diabetes'' contains measurements from Pima Indian diabetes patients (8 features per patient). 
We also use the MNIST handwritten digit recognition data set, which consist of $28 \times 28$ pixel grayscale images each containing a single digit between $0$ and $9$. 

Following our recommendations described in Appendix \ref{sec:practical_differences}, we scaled each feature to $[0, 1]$. For the UCI data sets, we used cross-validation to select $k$ and $\sigma$; for MNIST, we set them manually to $5$ and $0.25$.
Except for MNIST which already came divided into training and test sets, we randomly took 20\% of each data set for testing. Table~\ref{table:knn-vs-kde} contains the prediction accuracy and classification agreement for both algorithms on these test sets when using the remaining 80\% as training data.

Except for the Pima Indian diabetes data set, Gaussian kernel density classification exhibits an accuracy no more than $1\%$ lower than k-NN's, and the two algorithms agree on over $99\%$ of classifications. In the diabetes case, both algorithms perform poorly because the data is not well separated, with regions scattered with both classes. We argue such a data set is not well suited for k-NN in the first place. Prior investigation must decide whether a particular algorithm is suitable for use.

\section{Alternative Threat Models}
\label{sec:alternatives}

Sections \ref{sec:attacks} and \ref{sec:kernelregression} focused on the DO-Q threat model since it accounts for a number of realistic scenarios, including various collusion relationships. Without any party (directly or through collusion) possessing both the data owner and querier roles, there are several remaining threat situations. In this section, we explore these threat models. The one pervasive assumption remains: the data host and the cryptographic service provider (\csp) do not collude. As mentioned in Section \ref{sec:threatmodel}, a data owner or querier role itself does not allow privacy compromise of the system. Hence, the remaining alternate threat models are outlined below:
\begin{itemize}

\item Each party possesses only its original role (e.g., a data host only possesses a data host role).
\item The data host additionally obtains a data owner role.
\item The data host additionally obtains a querier role.
\item The \csp\ additionally obtains a data owner role.
\item The \csp\ additionally obtains a querier role.

\end{itemize}
Note that a given system may be under more than one of the above threats. For example, the \csp\ and data host may both obtain an additional role. However, since we assume no collusion directly or indirectly between the two cloud parties, we can analyze each threat independently.

These threat models, while each covering fewer scenarios, are still realistic. There are settings where the cloud parties do not collude with certain other parties. For example, the cloud parties may have reputations to maintain, and may avoid collusion or misbehavior for fear of legal, financial, and reputational backlash from discovery through security audits, internal and external monitoring, whistleblowers, or investigations. Note that in all cases, our kernel density estimation solution using the Gaussian kernel can still be used as an appropriate alternative, since the DO-Q threat model allows for a strictly more powerful (in terms of roles) adversary. Our discussion that follows will look at other solutions though, under each threat model.

\subsection{Original Roles Threat Model}

In this threat model, each party only possesses or utilizes the role it was originally designated. This is the simplest of threat models because data owners do not collude with any other parties. Since data owners themselves do not receive output from the k-NN system, they cannot compromise privacy. Hence, we can treat them as trusted, allowing us to revert to single data owner designs \cite{arxiv-knn, wong}, except using multiple data owners. These existing systems already provide privacy against cloud and querier adversaries. 
This scenario could realistically arise if data owners and querier are separate institutions. For example, hospitals could contribute data to a research study conducted at a university, and only the researchers at that university can query the system.

\subsection{Data Host Threat Models}

We consider now the threat models where the data host obtains the role of a data owner or a querier, but not both. This can be through collusion, or a design giving the data host the ability to insert data (e.g., encrypt data tuples) or initiate queries. 

An extension of the system in \cite{arxiv-knn} can protect against a data host with the data owner role. This system uses computation over Paillier ciphertext, similar in flavor to our kernel density estimation approach. Their system consists also of a \csp\, which they call the key party, and a data host. 
Their scheme is privacy-preserving in the single data owner model. The data host is allowed to encrypt, and hence possesses the role of a data owner, but all data that flows through the data host is encrypted, including that received from the key party. Given the data host does not know query values, the data host cannot learn anything that it did not already possess prior knowledge about (e.g., data submitted using the data owner role). Hence, we argue this scheme can be simply extended to multiple data owners in this threat model. Do note that this scheme is not secure under the DO-Q threat model since it still computes exact k-NN.
 
The data host with the querier role appears more problematic due to information leakage. Intuitively, the ability to observe plaintext query outputs can leak the classification of data tuples. In particular, the data host can observe what tuples are used in the output construction, and the query output associates classifications to those tuples. It is challenging to defend against this breach because in outsourced exact k-NN, some cloud party typically must determine the subset of encrypted data tuples associated with the system output. If that party colludes with a querier (or possesses the querying role), the group will be able to associate those tuples with query-returned classifications. In this situation, our Gaussian kernel density algorithm can again provide privacy, since the query output is dependent on the classification over all encrypted tuples. Hence, the query output cannot be associated with any particular set of data tuples.

\subsection{Key Party Threat Models}

Finally, we consider a key party either colluding with or possessing the role of a data owner or a querier, but not both. Again we will consider extending the Paillier-based scheme in  \cite{arxiv-knn}. A key party with either roles should not be able to compromise this system even with multiple data-owners. K-NN computation is done with relations to distances from a query point. Without knowing the query point, the key party who knows values in the data set (through the data owner role) will not be able to associate observed distance-based values with known data. Furthermore, the values it observes should either be masked (as in our scheme), or encrypted (as in \cite{arxiv-knn}). A key party with querying ability will only observe masked or encrypted values, hence it will not be able to learn anything from the computation. 

\section{Conclusion}

In this paper, we have presented the first exploration of privacy preservation in an outsourced k-NN system with multiple data owners. Under certain threat conditions, we can extend existing single data owner solutions to secure the multi-data owner scenario. However, under a particularly practical threat model, exact k-NN cannot be secured due to a set of adaptive distance-learning attacks. This highlights the need for investigation into the inherent leakage from outputs of machine learning algorithms. As an alternative solution, we propose use of a Gaussian kernel in kernel density estimation, which is the family of algorithms k-NN belongs to. We present a privacy preserving system that supports this similar algorithm, as well as evidence of its similarity to k-NN. 
Admittedly, the scheme may not be practical for large data sets, given that the computational and communication complexities scale linearly.
Improving performance is a remaining challenge, and we hope this first step lays the groundwork for future optimized privacy-preserving solutions.

\section{Acknowledgements}

This work is partially supported by a National Science Foundation grant (CNS-1237265). The first author is supported by the National Science Foundation Graduate Research Fellowship. Any opinion, findings, and conclusions or recommendations expressed in this material are those of the authors(s) and do not necessarily reflect the views of the National Science Foundation.

\newpage

\bibliographystyle{abbrv}
\bibliography{ppknn}

\appendix

\section{Kernel density estimation and regression}
\label{sec:background-kde-appendix}
Given $n$ samples $\{\mathbf{x}_1, \cdots, \mathbf{x}_n\}$ of a random variable $\mathbf{x}$, one can estimate the probability density function $p(\mathbf{x})$ with a kernel $K$ in the following way, called kernel density estimation:
\begin{align*}
  p(\mathbf{x}) &= \frac{1}{n} \sum_{i=1}^n K(\|\mathbf{x} - \mathbf{x}_i\|)
\end{align*}
where $\|\cdot\|$ is a norm. Given this estimate, classification can be determined as the most probable class:
\begin{align*}
  \mathcal{D}_C &= \{x_i\,|\,(x_i, y_i) \in \mathcal{D}, y_i = C\} \\
  p(\mathbf{x} | y = C, \mathcal{D}) &= \frac{1}{|\mathcal{D}_C|} \sum_{\mathbf{x}_i \in \mathcal{D}_C} K(\|\mathbf{x} - \mathbf{x}_i\|) \\
  p(y = C | \mathcal{D}) &= \frac{|\mathcal{D}_C|}{|\mathcal{D}|} \\
  p(y = C | \mathbf{x}, \mathcal{D}) &= \frac{p(\mathbf{x} | y = C, \mathcal{D}) p(y = C | \mathcal{D})}{p(\mathbf{x} | \mathcal{D})} \\
  &= \frac{p(\mathbf{x} | y = C, \mathcal{D}) p(y = C | \mathcal{D})}{\sum_{C'} p(\mathbf{x} | y = C', \mathcal{D}) p(y = C' | \mathcal{D})} \\
  &= \frac{\frac{1}{|\mathcal{D}_C|} \sum_{\mathbf{x}_i \in \mathcal{D}_C} K(\|\mathbf{x} - \mathbf{x}_i\|)
   \frac{|\mathcal{D}_C|}{|\mathcal{D}|}}{\sum_{C'} \frac{1}{|\mathcal{D}_C'|} \sum_{\mathbf{x}_i \in \mathcal{D}_C'} K(\|\mathbf{x} - \mathbf{x}_i\|)
   \frac{|\mathcal{D}_C'|}{|\mathcal{D}|}} \\
  &= \frac{\sum_{\mathbf{x}_i \in \mathcal{D}_C} K(\|\mathbf{x} - \mathbf{x}_i\|)}{\sum_{\mathbf{x}_i \in \mathcal{D}} K(\|\mathbf{x} - \mathbf{x}_i\|)} \\
	&\propto \sum_{\mathbf{x}_i \in \mathcal{D}_C} K(\|\mathbf{x} - \mathbf{x}_i\|) \\
	\arg\max_C p(y = C | \mathbf{x}, \mathcal{D}) &= \arg\max_C \sum_{\mathbf{x}_i \in \mathcal{D}_C} K(\|\mathbf{x} - \mathbf{x}_i\|)
\end{align*}
Therefore, to classify a particular point $\mathbf{x}$, we can sum the kernel values of the points which belong to each class and determine which class has the largest sum.

\section{Additional Security Analysis}
\label{sec:additional_security_analysis}

In this section, we additionally analyze our scheme to show that it leaks no further data beyond what the classification output leaks, under our semi-honest adversary model. More formally, our scheme leaks no further data than an ideal scheme where the data is submitted to a trusted third party, who computes the Gaussian kernel value sums and produces the same classification output. In the below proofs, we abbreviate the data host as the DH and the crypto-service provider as the CSP. 
Also, our initial discussion of the protocol design in Section \ref{sec:protocol_design} already justified the security provided by each step of the construction. We will avoid repeating these detail in the proofs and analyze leakage by the algorithm as a whole.

\begin{theorem}
    \label{proof:sqdist}
    Algorithm \ref{alg:sqdist} leaks nothing about data tuples or queries.
\end{theorem}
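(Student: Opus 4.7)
The plan is to prove the theorem by a standard simulation-based argument in the semi-honest model. For each of the two cloud parties I would exhibit a polynomial-time simulator that, given only the party's own inputs and the ideal-world output it is entitled to, produces a transcript indistinguishable from its real view of Algorithm~\ref{alg:sqdist}. In $SquaredDist$ the DH's ideal input/output pair is $((E(a),E(b)),\,E(\|a-b\|^2))$ while the CSP has neither private input nor protocol output; the theorem therefore reduces to showing that the CSP's view is independent of $a,b$ and that the DH's view can be reconstructed from its own input--output pair.

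For the CSP I would give an information-theoretic simulator. The only messages the CSP receives are the $y_i=E(a_i-b_i+\mu_i)$, which it decrypts to $a_i-b_i+\mu_i \bmod N$. Because $\mu_i$ is drawn uniformly from $\mathbb{Z}_N$, this masked value is a perfect one-time pad: uniform over $\mathbb{Z}_N$ and statistically independent of $a_i-b_i$. The simulator draws $m$ uniform elements of $\mathbb{Z}_N$, encrypts each under the (public) Paillier key, and feeds the resulting ciphertexts to the CSP; by the randomized nature of Paillier encryption the two distributions coincide exactly, and no cryptographic assumption is needed at this stage.

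For the DH I would invoke IND-CPA security of Paillier. From $(E(a),E(b))$ the DH locally derives $x_i$; the only externally supplied messages in its view are the $w_i$ returned by the CSP. The simulator samples the $\mu_i$ honestly, chooses $w_1,\dots,w_{m-1}$ as fresh Paillier encryptions of $0$, and then sets $w_m$ to the unique element of $\mathbb{Z}_{N^2}^{*}$ that forces the DH's post-processing $\prod_i \bigl(w_i\cdot x_i^{-2\mu_i}\cdot E(-\mu_i^2)\bigr)$ to equal the ideal output $E(\|a-b\|^2)$; this is well defined because Paillier ciphertexts form a group under multiplication. A hybrid argument that swaps each real $w_i$ for its simulated counterpart, invoking IND-CPA once per step, then shows that the simulated view is computationally indistinguishable from the real one.

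The main obstacle I anticipate lies precisely in reconciling the DH's simulated view with the fixed ideal output: the simulator cannot set every $w_i$ to a random encryption of $0$, since then the DH's deterministic post-processing would not yield $E(\|a-b\|^2)$. Pinning a single coordinate (here $w_m$) circumvents this, but I would need to verify that the constructed $w_m$ is distributed as a well-formed Paillier ciphertext and that the hybrid sequence telescopes correctly; both follow from Paillier's group structure and from semantic security hiding the underlying plaintexts. A secondary subtlety worth flagging in the proof is that the one-time-pad argument on the CSP side uses the fact that $\mu_i$ is uniform over the full group $\mathbb{Z}_N$; any optimization that shrinks $\mu_i$ to a shorter range would downgrade perfect hiding to statistical hiding and would require an explicit statistical-distance bound in terms of the ratio between the mask range and $N$.
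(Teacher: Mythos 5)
Your proposal is correct, and it takes a genuinely more rigorous route than the paper's own proof. The paper's argument for this theorem is a two-observation informal one: the DH sees only Paillier ciphertexts in steps 1 and 3 (hence learns nothing by semantic security), and the CSP in step 2 sees plaintexts carrying the uniform additive mask $\mu_i \in \mathbb{Z}_N$ applied in step 1(c) (hence ``cannot determine the true value''). Your simulators rest on exactly these two facts but upgrade them to the standard real/ideal formulation, and in doing so you handle two points the paper glosses over. First, you correctly isolate the CSP side as information-theoretic: since the CSP holds the decryption key, Paillier security buys nothing there, and only the one-time-pad property of $d_i + \mu_i \bmod N$ matters; the paper's closing appeal to ``the security of the Paillier cryptosystem'' for \emph{both} parties blurs this distinction, whereas your exact simulation also uses (correctly) that $y_i = x_i \cdot E(\mu_i)$ inherits fresh uniform randomness from $E(\mu_i)$ and is thus distributed as a fresh encryption of a uniform plaintext. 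Second, your pinning of $w_m$ to keep the simulated DH view jointly consistent with the ideal output ciphertext addresses view--output consistency, which the paper never mentions; this is legitimate because every element of $\mathbb{Z}_{N^2}^{*}$ is a well-formed ciphertext, so the pinned $w_m$ looks like a fresh encryption of some (different) plaintext and the hybrid over $w_1,\dots,w_{m-1}$ reduces to IND-CPA one index at a time. The one caveat you should state explicitly is that the DH-side indistinguishability holds only against distinguishers not holding the secret key; this does not follow from IND-CPA unconditionally but rather formalizes the paper's standing assumption that the DH and CSP do not collude. What the paper's terse argument buys is brevity and a direct match to the protocol description; what yours buys is a precise meaning for ``leaks nothing'' and a proof skeleton that composes cleanly into the analysis of Algorithms 2 and 3, where the paper again argues only informally.
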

\begin{proof}
As we see in step 1 of Algorithm \ref{alg:kernel}, the inputs to Algorithm \ref{alg:sqdist} are one encrypted data tuple and one encrypted query. In step 1 and 3 of the algorithm, only encrypted data is visible to the DH. In step 2, the CSP does see plaintext. However, each plaintext has had a random additive mask applied to the original true value, done by the DH in step 1(c). Thus, the CSP cannot determine the true value from the plaintext. Under the security of the Paillier cryptosystem, the DH and the CSP learn nothing about data tuples or queries.
\end{proof}

\begin{theorem}
    \label{proof:kernel}
    Algorithm \ref{alg:kernel} leaks nothing about the data tuples or queries.
\end{theorem}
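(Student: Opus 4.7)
The plan is to reduce the claim to two observations: (i) the subroutine $SquaredDist$ itself leaks nothing, which we get for free from Theorem~\ref{proof:sqdist}, and (ii) the only additional quantities seen by either cloud party during Algorithm~\ref{alg:kernel} are either Paillier ciphertexts or values that have been additively masked modulo $N$ by a fresh uniform $\mu_i$. So the proof will walk through the algorithm line by line, classify what each party observes, and argue indistinguishability from random (or from an ideal-world view containing nothing).

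First I would invoke Theorem~\ref{proof:sqdist} to conclude that step~1(a), the call to $SquaredDist(t_i, q)$, reveals nothing to either party beyond producing the ciphertext $s_i = E(\|t_i - q\|^2)$, which is held only by the DH. I would then observe that in steps~1(b)--1(d) the DH sees only $\mu_i$ (its own random choice), $E(\mu_i)$, $E(\|t_i-q\|^2)$, and the product $e_i$; all are either values the DH generated itself or Paillier ciphertexts, so by the semantic security of Paillier nothing about $t_i$ or $q$ is revealed to the DH.

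The key step is step~2, which is the only place plaintext is uncovered. The CSP decrypts $e_i$ and obtains $\|t_i - q\|^2 + \mu_i \bmod N$. Since $\mu_i$ was chosen uniformly from $\mathbb{Z}_N$ independently of every prior value in the CSP's view, this decrypted value is a one-time-pad encryption of the true squared distance and is therefore uniformly distributed in $\mathbb{Z}_N$, independent of $t_i$ and $q$. Consequently, the kernel value $g_i$ that the CSP computes from it is a deterministic function of a quantity statistically independent (from the CSP's viewpoint) of the true inputs; in particular, the CSP cannot correlate $g_i$ with any candidate data tuple even if it happens to know some plaintext tuples. Formally, one would exhibit a simulator for the CSP's view that samples $|D|$ independent uniform elements of $\mathbb{Z}_N$ and encrypts them under the CSP's public key, producing a distribution identical to the real view.

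The only subtle point, and the one I would be most careful with, is the independence of the $\mu_i$'s across successive queries and across invocations: each $\mu_i$ must be fresh so that the CSP cannot subtract views from two different queries to cancel masks. Assuming the DH follows the protocol (semi-honest) and draws each $\mu_i$ independently at the start of each query, the one-time-pad argument goes through unchanged. Combining these observations with Theorem~\ref{proof:sqdist} yields that neither the DH nor the CSP learns anything about any $t_i$ or $q$ from Algorithm~\ref{alg:kernel}, which is exactly the claim.
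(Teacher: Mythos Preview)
Your proposal is correct and follows essentially the same approach as the paper: invoke Theorem~\ref{proof:sqdist} for the $SquaredDist$ call, note that the DH sees only Paillier ciphertexts (hence learns nothing by semantic security), and argue that the CSP's decrypted value in step~2 is one-time-pad masked by the fresh uniform $\mu_i$ and therefore reveals nothing. Your write-up is more detailed than the paper's (you sketch an explicit simulator and flag the need for per-invocation freshness of the $\mu_i$), but the underlying decomposition and reasoning are the same.
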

\begin{proof}
    In step 1, the DH only operates on encrypted values. Theorem \ref{proof:sqdist} proves that the $SquareDist$ algorithm (Algorithm \ref{alg:sqdist}) leaks nothing about the data tuples or queries, so the DH learns nothing from running $SquareDist$.

    In step 2, the CSP again views plaintext, but a random additive mask has been applied (step 1(b)). Thus, the CSP cannot determine the original unmasked value, and does not learn anything about the data tuples or queries.
\end{proof}

\begin{theorem}
    \label{proof:class}
    Algorithm \ref{alg:class} leaks nothing about the data tuples or queries, except what may inherently leak from the classification output of kernel density estimation using a Gaussian kernel.
\end{theorem}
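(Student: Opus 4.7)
The plan is to simulate each party's view step by step through Algorithm \ref{alg:class} and argue that everything either party observes is either (i) a Paillier ciphertext under a key the observer does not hold, (ii) a value masked by fresh randomness chosen by the other party, or (iii) derivable from the final classification output alone. Step 0 invokes $KernelValue$, which by Theorem \ref{proof:kernel} leaks nothing, so we may take as a starting point that the DH holds encrypted distance values and the CSP holds randomly-masked Gaussian kernel values $g_i$ whose corresponding plaintext distances are information-theoretically hidden by the $\mu_i$ masks from Algorithm \ref{alg:kernel}.

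In steps 1--4, the DH only manipulates Paillier ciphertexts, so under the semantic security of Paillier its view contributes no information about the underlying data tuples or queries. The only message sent to the CSP in this block is $v_i = B_i \times E(c_i)$, where $B_i$ is sampled uniformly from invertible $c \times c$ matrices and is never revealed. I would argue that, conditioned on the CSP's prior view, $v_i$ is indistinguishable from an encryption of a uniformly random vector, so scaling it by $g_i$ and returning $w_i$ does not help the CSP learn $c_i$ or the distances behind $g_i$. The DH then strips $B_i$ and the distance mask in step 3 while still operating purely on ciphertexts.

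In steps 5--8, the DH applies fresh additive masks $\mu_c \in \mathbb{Z}_N$ to each coordinate of the encrypted sum vector $A$ before handing ciphertexts to the CSP. When the CSP decrypts in step 7(a), it obtains $A_c + \mu_c \bmod N$, which by the one-time-pad property of uniform additive masking modulo $N$ is uniformly distributed and independent of $A_c$. The only further exchange is an execution of Yao's garbled-circuit protocol for the arg-max function $f$, with the DH as generator and the CSP as evaluator, and a 1-out-of-2 oblivious transfer for the CSP's inputs. Invoking the standard simulation-based security of Yao's protocol and of OT against semi-honest adversaries, the CSP's view can be simulated from $f(\,\cdot\,)$ alone (i.e., from the class index) and the DH's view can be simulated from the garbled output, which the DH then de-garbles to recover the same class index that is returned to the querier.

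Putting the three blocks together, I would conclude by a hybrid argument that the joint view of the DH and the CSP can be simulated given only the final classification output, which is exactly what a trusted third party computing Gaussian kernel density estimation would reveal. The main obstacle I expect is the garbled-circuit block: one has to be careful that the additive masks $\mu_c$ sent as garbled inputs, together with the masked $A_c$ decrypted by the CSP, do not accidentally leak individual kernel-value sums through a channel outside the circuit, and that the reduction modulo $N$ used for masking is compatible with the arg-max computation inside $f$ (so that $f$ genuinely returns the true arg-max and not an artifact of modular wrap-around). Once this compatibility is handled and the standard Yao plus OT simulators are invoked, the remaining arguments are routine applications of Paillier semantic security and the invertible-matrix masking established for steps 1--4.
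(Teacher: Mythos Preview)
Your proposal is correct and follows essentially the same step-by-step approach as the paper: invoke Theorem~\ref{proof:kernel} for step~0, argue that the DH sees only Paillier ciphertexts while the CSP sees only values hidden by fresh random masks (the invertible-matrix mask on $c_i$ and the additive masks on the kernel sums), and then appeal to the security of Yao's protocol and oblivious transfer for the garbled-circuit block. Your simulation/hybrid framing is more formal than the paper's own rather informal argument, and the caveat you raise about modular wrap-around inside the arg-max is a real subtlety that the paper does not address.
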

\begin{proof}
Step 0 leaks no information based on Theorem \ref{proof:kernel}.
Again, the DH only operates on encrypted values in step 1. In step 2, the CSP has access to the Gaussian kernel value for a masked value (from step 2(a) of Algorithm \ref{alg:kernel}), and a randomly masked classification vector. Since all values are masked, the CSP cannot determine the true value or classification of any data tuple or query. Note the CSP does not return a decrypted classification vector to the DH, resulting in the DH still only operating on encrypted values in steps 3 through 6. Steps 6 through 8 are simply Yao's garbled circuit protocol for the defined function $f$. Through this construction, the garbled circuit's inputs, which are masked values and their associated additive masks, are not revealed to any party that doesn't possess them. Only the DH possesses the masks, and only the \csp\ can obtain the masked kernel value sums through decryption. Since these \csp-owned values are masked, the \csp\ learns nothing from the plaintext. By the security of Yao's protocol, nothing else is leaked by the garbled circuit protocol except the circuit output, which is the classification output of our algorithm.

Thus, no information is revealed about the data tuples or queries from Algorithm 3 except what may inherently leak from the classification output of kernel density estimation with a Gaussian kernel. Our protocol is as secure as an an ideal model, where a single trust cloud party receives all data tuples and computes kernel density estimation in the clear.
\end{proof}

\section{Additional Considerations with Gaussian Kernel Density Classification}
\label{sec:practical_differences}

Here we discuss additional considerations with using Gaussian kernel density estimation in place of k-NN.
We construct hypothetical data sets where we can expect them to behave differently, and discuss the underlying causes for the discrepancy.
We also discuss some practical issues specific to using a Gaussian kernel and how to solve them.

\subsection{Causes of Divergence}
\begin{figure}
\centering
	\includegraphics[width=0.3\textwidth, height=30mm, keepaspectratio=true]{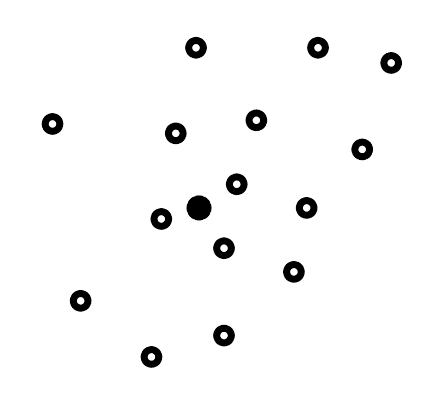}
	\caption{A visualization of unbalanced data. The filled points and unfilled points represent separate classes. Depending on the distance metric and the parameter of the Gaussian kernel, it is possible for classification with kernel density estimation to predict \emph{all} possible points in the space as the unfilled class, whereas $1$-NN classification would not.}
	\label{fig:unbalanced-data}
\end{figure}
Given a set of training data and parameters for the Gaussian kernel, assume query $q$ is classified as class $A$.
If sufficiently many points of class $B$ are added to the training data, the classification of $q$ will change to $B$ no matter how far away these extra points are.
In contrast, the classification under k-NN would remain unaffected if the new points are farther away than any of the existing $k$ nearest neighbors.
Figure~\ref{fig:unbalanced-data} illustrates a similar situation, where a point of one class is surrounded by many points of a different class.

To see why, recall from Appendix \ref{sec:background-kde-appendix}:
\[p(y = C | \mathcal{D}) = \frac{|\mathcal{D}_C|}{|\mathcal{D}|},\]
which means that the prior probability of a class is equal to its proportion in the training data.
If the data contains significantly more of one class than the other, then Gaussian kernel density classification will tend to predict the more frequent class, which arguably is preferable if the query points also satisfy this assumption.
If it is important to detect some less commonly occurring classes even at the cost of increased false positives, the less common classes should be weighted higher while computing the summed influences for each class. This can be done using a scaled basis vector for that class.

\subsection{Importance of Distance Metric}
A significant difference between classification with k-NN and with Gaussian kernel density estimation arises from the usage of the distances in computing the decision.
With k-NN, only the ordering of the distances to the labeled data points matters, and changes in the actual values of the distances have no effect if the order remains constant.
However, the Gaussian kernel is a functions of distance.
While the Gaussian kernel is monotonic (it decreases in value as distance increases), the sum of the values for each class can dramatically change even if the ordering of the distances between points does not.
In particular, as the Gaussian kernel is not linear, the classification of a query might change if distances are all scaled by some factor.

The limited precision of numbers used in calculation creates a more practical concern that the kernel value for a large distance will round down to $0$, effectively imposing a finite width on the Gaussian kernel.
If \emph{all} distances between pairs of points exceed this width, then kernel density estimation fails to provide any information about a query point; if a large fraction does, then the accuracy will correspondingly suffer.

To solve these issues, we recommend re-scaling all features to fit in $[0, 1]$.
Since the data owners already need to agree on the number of features and a common meaning for each feature, they can also find a consistent way to scale each feature.
In fact, if all features contain roughly equal information, then there is no particular reason some features should span a wider range and have greater influence on the distance metric.
Other domain-specific methods for ensuring that the distance between two points is never too large would also suffice.

\subsection{Selecting Parameters}
Both k-NN and the Gaussian kernel contain parameters we need to select: $k$, the number of neighbors to consider, and $\sigma$, the standard deviation for the Gaussian distribution, respectively. 
Choosing them inappropriately can lead to poor classification performance.
For example, if points of one class are scattered amongst points of another class, then a large $k$ will prevent us from classifying the scattered class correctly.
If $\sigma$ is too large, then the differences in the summed influence for each class will depend more heavily on the number of data points in each class rather than the distance to the points, and the system will tend to classify queries as the more frequent class.
Parameters should be selected using standard techniques such as cross-validation and grid search.


\end{document}